\def\etal{{\it et~al.}\,}
\newcolumntype{C}{>{$\ }c<{\ $}} 
\newcommand{\hcell}[1]{\cellcolor{yellow}#1} 
\newtheorem{theorem}{Theorem}
\LetLtxMacro{\oldalgorithmic}{\algorithmic}
\LetLtxMacro{\endoldalgorithmic}{\endalgorithmic}
\renewenvironment{algorithmic}[1][0]{%
  \hrulefill\par
  \oldalgorithmic[#1]}
  {\endoldalgorithmic\par
   \vspace*{-.5\baselineskip}
   \hrulefill\par
  }
\begin{document}

\date{}

\title{Computing random $r$-orthogonal Latin squares}
\author{Sergey Bereg\thanks{
Department of Computer Science,
University of Texas at Dallas,
Box 830688,
Richardson, TX 75083
USA.}
}

\maketitle

\vspace{-5mm}

\begin{abstract}
Two Latin squares of order $n$ are $r$-orthogonal if, when superimposed, there are exactly $r$ distinct ordered pairs.
The spectrum of all values of $r$ for Latin squares of order $n$ is known. 
A Latin square $A$ of order $n$ is $r$-self-orthogonal if $A$ and its transpose are $r$-orthogonal. 
The spectrum of all values of $r$ is known for all orders $n\ne 14$. 
We develop randomized algorithms for computing pairs of $r$-orthogonal Latin squares of 
order $n$ and algorithms for computing $r$-self-orthogonal Latin squares of order $n$.
\end{abstract}

\section{Introduction}

For a positive integer $n$,
a Latin square of order $n$ is an $n\times n$ array
filled with $n$ different symbols, each occurring exactly 
once in each row and exactly once in each column. 
For a positive integer $n$, let $[n]$ denote the set $\{1,2,\dots,n\}$.
We denote the $(i,j)$-th element of a Latin square $A$ by $A_{i,j}$.
In this paper, we assume that the symbols in a Latin square of order $n$ are in $[n]$, i.e. $A_{i,j}\in [n]$ for all $i,j\in [n]$.

Two Latin squares $A$ and $B$ of order $n$ 
are said to be {\em orthogonal} if every ordered pair of symbols 
occurs exactly once among the $n^2$ pairs $(A_{i,j},B_{i,j}), i\in [n],j\in [n]$.
Orthogonal Latin squares were studied in detail by Leonhard Euler see Fig. \ref{f1}.  
Orthogonal Latin squares exist for all orders $n>1$ except $n = 2, 6$.

\begin{figure}[htb]
\centering
\begin{tabular}{|C|C|C|C|}
\hline
A\alpha & B\delta & C\beta & D\gamma\\\hline
B\gamma & A\beta & D\delta & C\alpha\\\hline
C\delta & D\alpha & A\gamma & B\beta\\\hline
D\beta & C\gamma & B\alpha & A\delta \\\hline
\end{tabular}
\caption{Orthogonal Latin squares using Euler's notation (one Latin square uses the first $n$ upper-case letters from the Latin alphabet
 and the other uses the first $n$ lower-case letters from the Greek alphabet). 
Orthogonal Latin squares are also known as  Graeco-Latin squares or Euler squares. }
\label{f1}
\end{figure}

If $A$ and $B$ are orthogonal Latin squares, then we say that $A$
is an {\em orthogonal mate} of $B$ and vice versa.
In 1779 Euler \cite{euler23} proved that for every even $n$ 
there exists a Latin square of order $n$
that has no orthogonal mate.
van Rees \cite{rees90} called such squares {\em bachelor} Latin squares. 
The existence of bachelor Latin squares of orders $n\equiv 1\pmod 4$ was shown by Mann \cite{mann44}.
In 2006 Evans \cite{evans06} and Wanless and Webb \cite{ww06}
completed the remaining case of Latin squares of orders $n\equiv 3\pmod 4$.

\begin{theorem}[\cite{evans06,ww06}] 
For any positive integer $n\notin\{1,3\}$ 
there exists a Latin square of order $n$ that
has no orthogonal mate.
\end{theorem}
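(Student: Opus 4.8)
The plan is to reduce everything to the transversal structure of a single Latin square and then to dispose of the residue classes of $n$ one at a time. Recall the classical equivalence: a Latin square $A$ of order $n$ has an orthogonal mate if and only if its $n^2$ cells can be partitioned into $n$ pairwise disjoint \emph{transversals}, a transversal being a set of $n$ cells, one in each row and one in each column, on which all $n$ symbols appear. The exploitable consequence is a one-cell obstruction: if some cell of $A$ lies in no transversal at all, then no partition into transversals can cover that cell, so $A$ is a bachelor. My entire strategy is therefore to exhibit, for each admissible $n$, a Latin square possessing such a forbidden cell (in the extreme case, a square with no transversal whatsoever).

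First I would clear the boundary cases. For $n = 2$ there are no orthogonal Latin squares at all, so every order-$2$ square is a bachelor and the claim is immediate; the excluded values $n \in \{1,3\}$ are omitted precisely because every Latin square of those orders does admit an orthogonal mate, so nothing is being asserted there. For even $n \ge 4$ I would take the Cayley table of the cyclic group $\mathbf{Z}_n$, i.e.\ $A_{i,j} = i + j \pmod n$. A short counting argument rules out transversals when $n$ is even: any transversal $\{(i,\sigma(i))\}$ has symbol-sum $\sum_i (i+\sigma(i)) \equiv 0 \pmod n$, yet the symbols of a transversal exhaust $\mathbf{Z}_n$ and sum to $\frac{n(n-1)}{2} \not\equiv 0 \pmod n$ for even $n$, a contradiction. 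Thus every cell is forbidden; this is the essence of Euler's even-order result. For $n \equiv 1 \pmod 4$ I would invoke Mann's construction, which already supplies a bachelor square in that class.

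The genuine work is the case $n \equiv 3 \pmod 4$ with $n \ge 7$, and this is where I expect the main obstacle. Here the group table of $\mathbf{Z}_n$ is useless: for odd $n$ it is rich in transversals (indeed it has an orthogonal mate). My approach would be to start from a structured square and perturb it locally---swapping a few entries, or planting a small subsquare---so as to manufacture a single cell that no transversal can use, while keeping the array Latin. I would first find small seeds in this class, say at $n = 7$ and $n = 11$, and certify a forbidden cell in each by a finite transversal search; then I would try to propagate the property to all larger $n \equiv 3 \pmod 4$ through a recursive step (a bordered enlargement, or a composition $L \mapsto L \otimes M$ with a fixed small factor) designed to preserve a forbidden cell, using the fact that a transversal of a composed square projects to transversals of its factors.

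The crux---and the step I expect to be hardest---is proving that the perturbed or composed square really does have a cell outside \emph{every} transversal, uniformly for all $n$ in the class rather than only for the machine-checked seeds. A purely local modification can create unexpected transversals elsewhere, so the argument must control the global transversal structure: I would encode ``cell $(i,j)$ lies in some transversal'' as a perfect-matching / system-of-distinct-representatives condition and then seek a parity or counting invariant---for instance the number of transversals through $(i,j)$ taken modulo $2$---that is forced to vanish. Establishing such an invariant across the whole congruence class, and verifying that the recursive step respects it, is the technical heart of the Evans and Wanless--Webb argument.
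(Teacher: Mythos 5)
The paper does not prove this theorem at all: it is imported with citations to Evans \cite{evans06} and Wanless--Webb \cite{ww06}, so your argument has to stand on its own, and it does not. The parts you do prove are fine: the equivalence between having an orthogonal mate and partitioning the cells into $n$ disjoint transversals is the correct classical starting point; the sum argument on the Cayley table of $\mathbf{Z}_n$ (transversal symbols sum to $\tfrac{n(n-1)}{2}\not\equiv 0$, cell values sum to $0 \pmod n$) correctly kills all transversals for even $n$; and invoking Mann \cite{mann44} for $n\equiv 1\pmod 4$ is legitimate, since that is a genuinely prior result. But the case $n\equiv 3\pmod 4$, $n\ge 7$, is precisely the content of the theorem being attributed to Evans and to Wanless and Webb --- it was the only open case --- and for it you offer a research plan, not a proof: machine-checked seeds at $n=7,11$, an unspecified perturbation or composition, and a parity invariant that you yourself say remains to be established. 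No construction is pinned down and the forbidden-cell property is never verified for infinitely many orders, so the heart of the theorem is missing.

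Worse, the one concrete propagation mechanism you name rests on a false lemma. It is not true that a transversal of a composed square projects to transversals of its factors: the Cayley table of $\mathbf{Z}_2$ has no transversal, yet the direct product of $\mathbf{Z}_2$ with itself is the table of the Klein four-group, which does have transversals (its Sylow $2$-subgroup is non-cyclic, so it admits a complete mapping). Composition with a fixed factor can therefore create transversals out of nothing, and a forbidden cell in a factor yields no forbidden cell in the product, so the recursive step as described fails. A correct treatment of the $n\equiv 3\pmod 4$ case must build the obstruction directly for every such order --- as Evans does with an explicit construction, and as Wanless and Webb do by exhibiting, for every $n>3$, a square with a cell through which no transversal passes --- rather than transporting it along products.
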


van Rees \cite{rees90} has conjectured that the portion of bachelor Latin squares of 
order $n$ tends to one as $n\to\infty$. 
However, this conjecture was based on a study of Latin squares of
small orders. 
Computational results by Egan and Wanless \cite{egan12} show 
that only a small proportion of Latin squares of orders 7, 8, and 9 possess orthogonal mates. 
McKay \etal \cite{mckay07} tested 10 million
random Latin squares of order 10 and estimated that
around 60\% of Latin squares of order 10 have mates.

A set of Latin squares of the same order, all pairs of which are orthogonal 
is called a set of {\em mutually orthogonal Latin squares} (MOLS). 
The study of mutually orthogonal Latin squares is a subject that has attracted
much attention due to applications in error correcting codes, 
cryptographic systems, compiler testing,
and statistics (the design of experiments) \cite{keedwell15}. 

The construction of MOLS is a notoriously difficult 
combinatorial problem, see for example \cite{mfl-cols-16,mgfl-20}.
In this paper, we study the problem of computing $r$-orthogonal 
Latin squares.
Two Latin squares of order $n$ are {\em $r$-orthogonal} ($r$-OLS)
if their superposition produces exactly $r$
distinct ordered pairs. 
Thus, orthogonal Latin squares of order $n$ are $n^2$-orthogonal. 
Belyavskaya \cite{bel76,bel77,bel92} systematically treated the following
question: For which integers $n$ and $r$ does a pair of 
$r$-orthogonal Latin squares of order $n$ exist? 
The spectrum of $r$-orthogonal Latin squares of order $n$ was determined by 
Colbourn and Zhu \cite{cz-srols-95} (with few exceptions) and 
completed later by Zhu and Zhang~\cite{zz-cs-03}.

\begin{theorem}[Zhu and Zhang~\cite{zz-cs-03}] 
\label{thm-pairs}
For any integer $n\ge 2$, there exists a pair of $r$-orthogonal Latin
squares of order $n$ if and only if $n\le r\le n^2$ and $r\notin\{n+1, n^2-1\}$ 
with the exceptions of $n$ and $r$ shown in Table \ref{tableT1T2}(a).
\end{theorem}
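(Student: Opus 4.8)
The plan is to establish both directions of the characterization.  First I would handle necessity, proving that no pair of $r$-orthogonal Latin squares of order $n$ can exist outside the stated range and forbidden values.  The bounds $n\le r\le n^2$ are immediate: the superposition of two Latin squares produces at most $n^2$ distinct ordered pairs (one per cell), and at least $n$, since for each fixed value $a$ appearing in $A$, the $n$ cells where $A_{i,j}=a$ must carry $n$ \emph{distinct} second coordinates (because those cells lie in distinct rows and distinct columns of $B$), forcing at least $n$ pairs whose first coordinate is $a$; summing, one cannot drop below $n$ (equality corresponding to $A=B$ after relabeling).  The exclusion of $r=n^2-1$ is a classical counting parity argument: if all but one pair were distinct, exactly one ordered pair would occur twice and one would be missing, but a short double-counting on row and column sums shows this single defect is impossible.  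The exclusion of $r=n+1$ is a similar but more delicate argument ruling out exactly one "extra" pair beyond the diagonal-type minimum.

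Second, and this is the bulk of the work, I would prove sufficiency: for every admissible $(n,r)$ not in the finite exception Table~\ref{tableT1T2}(a), I would construct a pair of $r$-orthogonal Latin squares.  The natural strategy is a \emph{recursive / product construction}.  I would first treat small base orders $n$ directly, exhibiting for each the full admissible spectrum of $r$ by explicit squares or small searches, carefully isolating the genuine exceptions listed in the table.  Then I would use a product operation: given a pair of $r_1$-orthogonal squares of order $m$ and a pair of $r_2$-orthogonal squares of order $k$, their Kronecker-type product yields a pair of squares of order $mk$ whose orthogonality count is $r_1 r_2$ (since distinct pairs multiply across the block structure).  By choosing the factor spectra appropriately and combining with a few "perturbation" moves—local switches (intercalate-style swaps) that change the number of coincident pairs by small controlled amounts—I would fill in every value of $r$ in the target interval.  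The standard technique here is to first hit a dense set of values via products, then slide $r$ by $\pm 1$ or $\pm 2$ using switches to cover the gaps.

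The main obstacle will be achieving full \emph{contiguous coverage} of the interval $[n,n^2]\setminus\{n+1,n^2-1\}$ rather than just a sparse set of realizable $r$.  Product constructions naturally produce composite-looking values of $r$, so hitting prime or otherwise "rigid" values of $r$, and the values near the two endpoints $r=n$ and $r=n^2$, requires dedicated fine-grained arguments; the squares that are \emph{nearly} orthogonal (large $r$) and those that are \emph{nearly} identical (small $r$) behave differently and each extreme needs its own family of switches.  I would expect the region just above $r=n$ and just below $r=n^2$ to demand the most care, and it is precisely in these regimes that the finite list of genuine exceptions in Table~\ref{tableT1T2}(a) arises: these are the values where every candidate construction and every local switch provably fails, so confirming that the table is \emph{complete}—that no further exceptions hide among small orders—requires either a clever invariant or an exhaustive computer search over the finitely many residual cases.
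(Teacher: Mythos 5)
This statement is not proved in the paper at all: it is an external result, quoted with attribution to Zhu and Zhang \cite{zz-cs-03} (building on Colbourn and Zhu \cite{cz-srols-95} and Belyavskaya), and the paper simply uses it. So your proposal must be judged on its own, and it has genuine gaps. The most concrete one is in the necessity direction: your argument for $r\ge n$ is wrong as stated. You claim that the $n$ cells where $A_{i,j}=a$ ``must carry $n$ distinct second coordinates because those cells lie in distinct rows and distinct columns of $B$.'' Cells in distinct rows and columns of a Latin square can perfectly well carry equal symbols; indeed if $B=A$, all $n$ cells with $A_{i,j}=a$ carry the same symbol $a$, and $r(A,A)=n$ exactly. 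If your claim were true, every pair of Latin squares would be $n^2$-orthogonal. The correct (and much weaker) observation is that each first coordinate $a$ contributes at least one distinct ordered pair, so $r\ge n$. The exclusions of $r=n+1$ and $r=n^2-1$ are also only asserted (``a short double-counting,'' ``a similar but more delicate argument''), not proved; both do admit short counting arguments, but you have not given them.

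The larger gap is sufficiency, which is the entire substance of the theorem and which you reduce to a strategy description rather than a proof. A Kronecker-type product does multiply orthogonality counts ($r_1r_2$ for order $mk$), but this produces only a sparse, multiplicatively structured set of values, and the claim that ``local switches'' can then slide $r$ by $\pm1$ or $\pm2$ to achieve contiguous coverage of $[n,n^2]\setminus\{n+1,n^2-1\}$ is precisely the hard part — nothing in the proposal shows such switches exist, that they change $r$ by controlled amounts near the extremes, or that they do not get stuck. The actual literature proof does not follow this outline: it relies on pairwise-balanced-design and group-divisible-design constructions, incomplete Latin squares with holes, and careful filling arguments, accumulated over several papers, together with exhaustive computation to certify that Table \ref{tableT1T2}(a) lists \emph{all} genuine exceptions. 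Your final paragraph correctly anticipates that completeness of the exception list needs an exhaustive search, but anticipating the need is not the same as discharging it. As it stands, the proposal is a research plan with one incorrect step, not a proof.
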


\begin{table}[htb]
\centering
\begin{tabular}{ccc}

\begin{tabular}{|c|l| }
\hline
$n$ & Genuine exceptions of $r$\\\hline
 2 & 4  \\ 
 3 & 5, 6, 7  \\  
 4 & 7, 10, 11, 13, 14\\
5 & 8, 9, 20, 22, 23\\
6 & 33, 36\\\hline
\end{tabular}

& \quad\quad & 

\begin{tabular}{|c|l| }
\hline
$n$ & Genuine exceptions of $r$\\\hline
 2 & 4  \\ 
 3 & 5, 6, 7, 9  \\  
 4 & 6, 7, 8, 10, 11, 12, 13, 14\\
5 & 8, 9, 12, 16, 18, 20, 22, 23\\
6 & 32, 33, 34, 36\\
7 & 46\\\hline
\end{tabular}

\\

(a) && (b) 

\end{tabular}
\vspace{3mm}
\caption{(a) Genuine exceptions of pairs of $r$-orthogonal Latin squares of order $n$. 
(b) Genuine exceptions of $r$-self-orthogonal Latin squares of order $n$.
}
\label{tableT1T2}
\end{table}

A Latin square which is orthogonal to its transpose is called a
{\em self-orthogonal Latin square} (SOLS). It is known that self-orthogonal Latin
squares exist for all orders $n\notin\{2,3,6\}$\cite{fz-sols-07}.
We say that a Latin square $A$ is {\em $r$-self-orthogonal} ($r$-SOLS) 
if $A$ and its transpose $A^T$ are $r$-orthogonal.
The spectrum of $r$-SOLS has been almost completely decided by Xu and
Chang \cite{zz-cs-03} and Zhang \cite{z-25-13}, as shown in the following theorem.

\begin{theorem}[Zhang~\cite{z-25-13}] \label{thm-self}
For any integer $n\ge 1$, there exists an $r$-self-orthogonal Latin square of order $n$ if and only if 
$n\le r\le n^2$ and $r\notin\{n+1, n^2-1\}$ with 26 genuine exceptions of $n$ and $r$ shown in Table \ref{tableT1T2}(b)
and one possible exception of $(n,r)=(14,14^2-3)$.
\end{theorem}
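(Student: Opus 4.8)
The plan is to prove both directions, treating the ``only if'' (necessity) quickly and devoting the real work to the ``if'' (construction) direction.

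\textbf{Necessity.} The key observation is that an $r$-self-orthogonal Latin square $A$ is, by definition, nothing but a Latin square that is $r$-orthogonal to the single specific mate $A^T$. Hence $A$ and $A^T$ form a pair of $r$-orthogonal Latin squares, and every necessary condition from Theorem~\ref{thm-pairs} applies verbatim: we immediately get $n\le r\le n^2$ and $r\notin\{n+1,n^2-1\}$, so no new argument is needed for the two forbidden values. What remains for necessity is to explain why the list of genuine exceptions in Table~\ref{tableT1T2}(b) is strictly larger than that of Table~\ref{tableT1T2}(a): the extra constraint ``the mate is the transpose'' is a rigid symmetry condition, and for each small order $n\le 7$ I would rule out the additional values of $r$ by a structural counting argument on the diagonal together with the reverse-pair structure (each off-diagonal cell $(i,j)$ contributes the reverse of the pair contributed by $(j,i)$), backed by exhaustive computer search where the combinatorics is too tight to resolve by hand.

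\textbf{Sufficiency: seeds and trades.} For the construction direction I would build, for every admissible pair $(n,r)$, an explicit $r$-SOLS. Two extreme seeds anchor the range: a symmetric Latin square of order $n$ satisfies $A=A^T$, so its superposition with $A^T$ yields only the $n$ pairs $(k,k)$ and realizes $r=n$; at the other end, a self-orthogonal Latin square (which exists for all $n\notin\{2,3,6\}$ \cite{fz-sols-07}) realizes $r=n^2$. The engine that fills the interval between these is a family of local \emph{trade} operations: one re-coordinatizes a small symmetric block of cells (an intercalate-type switch performed simultaneously on a cell set and its mirror image across the main diagonal, so that the self-orthogonal structure is preserved) in such a way that the number of distinct superimposed pairs changes by a small, controlled amount. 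Iterating trades from a suitable seed lets me sweep $r$ through a contiguous sub-interval, and by assembling finitely many seed-and-sweep families I cover all admissible $r$ for each fixed $n$.

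\textbf{Sufficiency: recursion to all orders.} To reach arbitrarily large $n$ I would use a product/recursive construction. Starting from a group-divisible or frame version of the objects --- a ``holey'' incomplete SOLS with prescribed hole sizes --- one combines an order-$m$ ingredient with order-$k$ ingredients into an order-$mk$ square whose number of distinct superimposed pairs is an additive function of the pieces' parameters; this is the usual Kronecker/Wilson-type product, now tracking the $r$-statistic rather than mere orthogonality. Given a rich enough stock of ingredient $r$-values from the base cases, a PBD-closure argument then yields every admissible $(n,r)$ for all sufficiently large $n$, leaving only finitely many small orders to be disposed of by the direct seed-and-trade method of the previous paragraph.

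\textbf{Main obstacle.} The hard part is fine control of $r$ near the two endpoints of the interval, just above $n$ and just below $n^2$. There, very few independent trades are available, the two globally forbidden values $n+1$ and $n^2-1$ sit, and --- not coincidentally --- almost all of the genuine exceptions of Table~\ref{tableT1T2}(b) cluster; pinning down exactly which boundary values are realizable is the most delicate bookkeeping in the proof. The sharpest difficulty of all is the lone open case $(n,r)=(14,14^2-3)$: order $14$ is an awkward order for orthogonality constructions (reminiscent of the classical obstruction at $n=6$), so it evades the recursive machinery, while the target $r=n^2-3$ lies in the thin boundary region where trades are scarcest and direct search is expensive. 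I would isolate this case and attempt either a tailored ad hoc construction or a focused computer search; I expect it to resist both, which is precisely why the theorem must record it as a single possible exception.
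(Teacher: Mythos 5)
The first thing to note is that the paper does not prove this statement at all: Theorem \ref{thm-self} is quoted as background, attributed to Xu--Chang and Zhang \cite{z-25-13}, and the paper's own contribution is algorithmic (computing $r$-SOLS), not a proof of the spectrum. So your proposal can only be measured against the strategy of the cited literature, and by that standard it is a roadmap rather than a proof. The one fully valid piece is the necessity reduction: since $(A,A^T)$ is a pair of $r$-orthogonal Latin squares, Theorem \ref{thm-pairs} does give $n\le r\le n^2$ and $r\notin\{n+1,n^2-1\}$ for free; and your two seeds are sound (a symmetric square realizes $r=n$, and a SOLS realizes $r=n^2$ whenever $n\notin\{2,3,6\}$).

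Everything else is deferred, and the deferred parts are the theorem. (a) The ``trade'' engine --- diagonal-symmetric intercalate-type switches that change the number of distinct superimposed pairs by a controlled amount --- is never constructed; whether such switches exist, preserve Latin-ness, and sweep $r$ through a contiguous interval is precisely the hard content, and the literature's constructions do not in fact proceed this way (they rely on incomplete and holey SOLS, starter-type direct constructions, and filling-in-holes lemmas). (b) Your recursive step asserts that the $r$-statistic is \emph{additive} under a Kronecker/Wilson-type product; the straightforward direct product is multiplicative (an $r_1$-SOLS of order $n_1$ and an $r_2$-SOLS of order $n_2$ yield an $r_1r_2$-SOLS of order $n_1n_2$), and the additive control you invoke exists only for frame-type constructions whose parameter bookkeeping you do not supply, so the PBD-closure claim has no force as stated. (c) The 26 genuine exceptions of Table \ref{tableT1T2}(b) --- which include $(7,46)$, not only orders $\le 6$ --- and the status of $(14,14^2-3)$ are nonexistence assertions; they cannot be ``expected to resist'' a search, they must be settled by exhaustive computations that are themselves part of the proof. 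In short, the proposal correctly locates where the difficulty lives (the boundary region near $n^2$ and the small-order exceptions) but discharges none of it.
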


In this paper, we develop randomized algorithms for computing $r$-orthogonal Latin squares
and $r$-self-orthogonal Latin squares.
The benefit of this approach is that $r$-orthogonal Latin squares 
of large orders can be computed. 
This is due to the polynomial running time. 
Our experiments show that the difficult case is when $r$ is close to $n^2$.
Evidence of this is the open problem of finding a $193$-self-orthogonal Latin 
square of order 14 \cite{z-25-13} which is the only case left to complete 
the spectrum of $r$-self-orthogonal Latin squares.  

{\em Related work}. 
Keedwell and Mullen \cite{km-spols-04}
investigated the construction of sets of $t$ Latin squares of a given non-prime-power order $q$ which are as close as possible to being a mutually orthogonal set. 
Dinitz and Stinson \cite{ds-mndop-05}
studied the problem of constructing sets of $s$ Latin squares of order $m$
such that the average number of different ordered pairs obtained by superimposing two of the $s$
squares in the set is as large as possible.
Arce-Nazario \etal \cite{acchmr-14}
discussed some computational problems concerning the distribution of orthogonal
pairs in sets of Latin squares of small orders.

\section{Preliminaries}

A {\em partial Latin square} of order $n$ is an $n\times n$ array in which\\
(i) each entry is either empty or it contains an element from $[n]$, and\\
(ii) each of the symbols $1,2,\dots,n$ occurs at most once in each row and at most once in each column of the array.
Completing partial Latin squares is NP-complete \cite{c-ccpls-84}.
However, some partial Latin squares can always be completed, for example, Latin rectangles. 
A Latin rectangle is a $k\times n$ array $A$ with $k\le n$ and $A_{i,j}\in [n]$ such that 
no number occurs more than once in any row or column.

\begin{theorem}[M. Hall \cite{hall1945}] \label{hall45}
Every $r\times n$ Latin rectangle, $0\le r\le n$, can be completed to a Latin square of order $n$. 
\end{theorem}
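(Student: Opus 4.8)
The plan is to prove the theorem by induction on the number of missing rows, reducing everything to a single \emph{one-row extension} step: given an $r\times n$ Latin rectangle with $r<n$, I will show that one may always append a new row to obtain an $(r+1)\times n$ Latin rectangle. Iterating this step $n-r$ times then fills the array completely and yields a Latin square of order $n$, which is exactly the desired completion.

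For the extension step, the task is to choose, for each column $j\in[n]$, a symbol to put in the new row so that (i) the $n$ chosen symbols are pairwise distinct, and (ii) no chosen symbol already occurs in its column. I would encode this as a bipartite graph $G$ whose two sides are the columns $[n]$ and the symbols $[n]$, joining column $j$ to symbol $s$ by an edge precisely when $s$ does not yet occur in column $j$. A perfect matching in $G$ assigns to every column a distinct, column-legal symbol, and this is exactly a valid new row.

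The key step is to show that $G$ always admits a perfect matching, and here I would exploit regularity. In an $r\times n$ Latin rectangle each column already contains $r$ distinct symbols, so each column vertex has exactly $n-r$ available symbols and hence degree $n-r$; dually, each symbol occurs once per row and therefore in exactly $r$ distinct columns, so each symbol is missing from exactly $n-r$ columns and also has degree $n-r$. Thus $G$ is $(n-r)$-regular, and since $r<n$ this common degree is positive. One may now either verify Hall's marriage condition directly by double counting — for any set $S$ of columns the $(n-r)|S|$ edges leaving $S$ land in $N(S)$, and each symbol absorbs at most $n-r$ of them, forcing $|N(S)|\ge|S|$ — or simply invoke the standard fact that every regular bipartite graph of positive degree has a perfect matching. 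Either route supplies the matching.

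Once the graph is set up the argument is essentially routine; the only point demanding care is the double count establishing regularity, i.e. the observation that each symbol appears in exactly $r$ columns of the rectangle (equivalently, the verification of Hall's condition). Granting that, the single-row extension is guaranteed, the induction proceeds without obstruction, and the partial $r\times n$ rectangle is completed to a full Latin square of order $n$.
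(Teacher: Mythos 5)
Your proof is correct and takes essentially the same route as the paper: reduce the theorem to a single-row extension obtained from a perfect matching (equivalently, an SDR) in the bipartite graph joining each column to the symbols it is still missing, then iterate $n-r$ times. The only difference is that you actually carry out the verification of Hall's condition via the $(n-r)$-regularity double count, a step the paper dispatches with ``it can be shown.''
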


The proof is based on the following theorem for SDR. 
A {\em system of distinct representatives}, or {\em SDR}, for a collection of finite sets
$S_1,S_2,\dots,S_m$, is a sequence $\langle s_1,s_2,\dots,s_m\rangle$
of $m$ distinct elements $s_i\in S_i$. 
Each $s_i$ is called a {\em representative} of set $S_i$. 
For example, the set $\langle 2,3,1\rangle$ is an SDR for the sets $S_1=\{2\}$, 
$S_2=\{1,3\}$ and $S_3=\{1,2\}$.

\begin{theorem}[P. Hall \cite{hall1935}] \label{hall35}
Let $S_1,S_2,\dots,S_m$ be a collection of $m$ finite sets.
Then an SDR for these sets exists if and only if, for all $k\in \{0,1,\dots,m\}$,
$|S_{i_1}\cup S_{i_2}\cup\dots\cup S_{i_k}|\ge k$, where the $k$ sets 
$S_{i_1},\dots,S_{i_k}$ represent any collection of $k$ sets chosen from the $m$ sets 
$S_1,S_2,\dots,S_m$.
\end{theorem}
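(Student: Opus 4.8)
The plan is to prove necessity directly and sufficiency by induction on the number of sets $m$, with Hall's condition assumed throughout. Necessity is immediate: if $\langle s_1,\dots,s_m\rangle$ is an SDR, then for any $k$ sets $S_{i_1},\dots,S_{i_k}$ the representatives $s_{i_1},\dots,s_{i_k}$ are $k$ distinct elements lying in $S_{i_1}\cup\cdots\cup S_{i_k}$, so this union has at least $k$ elements. The whole content is therefore in the converse, which I would establish by induction on $m$. For the base case $m=1$, Hall's condition with $k=1$ gives $|S_1|\ge 1$, so $S_1$ is nonempty and any of its elements serves as the representative.

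For the inductive step I would assume the result for every collection of fewer than $m$ sets and split into two cases according to how tightly Hall's condition is met. \textbf{Case 1 (no critical subcollection).} Suppose every subcollection of $k$ sets with $1\le k<m$ in fact satisfies $|S_{i_1}\cup\cdots\cup S_{i_k}|\ge k+1$, i.e.\ the condition holds with strict slack. Since $|S_m|\ge 1$, choose any $s_m\in S_m$ and delete it from the other sets, setting $S_i'=S_i\setminus\{s_m\}$ for $i<m$. Deleting one element lowers the size of any union by at most one, so any $k$ of the sets $S_1',\dots,S_{m-1}'$ have union of size at least $(k+1)-1=k$; Hall's condition holds for these $m-1$ sets, and by the inductive hypothesis they admit an SDR. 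Adjoining $s_m$ yields an SDR for the original collection, the new representative being distinct from the others because it was removed from every $S_i'$.

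\textbf{Case 2 (a critical subcollection exists).} Suppose instead that some $k$ with $1\le k<m$ and some $k$ sets, which after relabeling I take to be $S_1,\dots,S_k$, satisfy $|S_1\cup\cdots\cup S_k|=k$ exactly; write $T=S_1\cup\cdots\cup S_k$. These $k<m$ sets inherit Hall's condition, so by induction they have an SDR, and since it consists of $k$ distinct elements of the $k$-element set $T$ it uses up all of $T$. For the remaining sets I would remove $T$, setting $S_j''=S_j\setminus T$ for $k<j\le m$, and verify Hall's condition for them. Given any $\ell$ indices $J\subseteq\{k+1,\dots,m\}$, applying the original condition to the $k+\ell$ sets $\{S_1,\dots,S_k\}\cup\{S_j:j\in J\}$ gives $|T\cup\bigcup_{j\in J}S_j|\ge k+\ell$; since $T$ is disjoint from each $S_j''$, the left side equals $k+|\bigcup_{j\in J}S_j''|$, whence $|\bigcup_{j\in J}S_j''|\ge\ell$. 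The induction then furnishes an SDR for $S_{k+1}'',\dots,S_m''$ drawn entirely from the complement of $T$, and combining it with the SDR for $S_1,\dots,S_k$ produces an SDR for the whole collection; the two parts are automatically disjoint because one lies in $T$ and the other outside it.

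The main obstacle is the critical case: the argument hinges on the observation that once a subcollection is tight its SDR must exhaust the corresponding union $T$, which is exactly what lets one ``quotient out'' $T$ and still preserve Hall's condition for the remaining sets. The slightly delicate point to get right is the inequality $|T\cup\bigcup_{j\in J}S_j|\ge k+\ell$ together with its rewriting via disjointness, since this is where the original condition on $k+\ell$ sets is converted into the condition on the $\ell$ reduced sets.
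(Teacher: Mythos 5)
Your proof is correct: it is the classical critical-set induction (the Halmos--Vaughan argument) for Hall's marriage theorem, with necessity handled trivially and sufficiency split into the ``slack everywhere'' case and the ``tight subcollection'' case. All the delicate points are handled properly: in Case 1 the deletion of $s_m$ costs at most one element from any union, so the slack hypothesis $\ge k+1$ keeps Hall's condition intact for the $m-1$ reduced sets; in Case 2 the identity $T\cup\bigcup_{j\in J}S_j = T\cup\bigcup_{j\in J}S_j''$ (a disjoint union) correctly converts the original condition on $k+\ell$ sets into Hall's condition for the reduced sets, and both inductive applications are to strictly fewer than $m$ sets since $1\le k<m$. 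One point of comparison to flag: the paper itself gives no proof of this statement at all --- it is P.~Hall's 1935 theorem, quoted with a citation and then used (via the SDR/matching construction) to justify M.~Hall's rectangle-completion theorem. So there is no ``paper proof'' to match against; your argument is a correct, self-contained proof of the cited result, and indeed the standard one. The only cosmetic gap is the trivial case $m=0$ (empty collection, empty SDR), which the theorem's quantifier $k\in\{0,1,\dots,m\}$ technically admits but which needs no argument.
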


The proof of Theorem \ref{hall45} uses a collection of sets $S_i,i\in [n]$,
where $S_i$ is the set of all $x\in [n]$ such that $x$ does not
occur in the column $j$ of a $r\times n$ Latin rectangle $A$. 
It can be shown that these sets satisfy Hall's condition (Theorem \ref{hall35}):  
for all $k\in \{0,1,\dots,n\}$, the union of any $k$ sets in the collection contains at least $k$ elements.
By Theorem \ref{hall35}, an SDR for these sets exists.
Therefore Latin rectangle $A$ can be extended to a $(r+1)\times n$ Latin rectangle by adding 
a row of the representatives. 
An SDR can be computed as a matching in a bipartite graph $G_A=(V_1,V_2,E)$ 
where $V_1=\{v_1,\dots,v_n\}, V_2=\{u_1,\dots,u_n\}, E=\{(v_i,u_j)~|~j\in S_i\}$.
A Latin square can be computed using this step $n-r$ times.

\section{Completing Orthogonal Latin Rectangles}

In order to construct $r$-orthogonal Latin squares of order $n$, 
we apply a randomized algorithm called Algorithm A1 where
\begin{enumerate}
\item
the first rows of two Latin squares $A$ and $B$ are 
computed as random permutations of $[n]$ and
\item
the remaining rows of $A$ and $B$ are computed using a matching algorithm 
applied to bipartite graphs $G_{A'}$ and $G_{B'}$
and random order of sets $S_i$ where $A'$ and $B'$ are the current Latin rectangles. 
\end{enumerate} 

\begin{figure}[htb]
\centering
$A=$
\begin{tabular}{|ccccccc|}
\hline
 1 & 2 & 7 & 5 & 6 & 4 & 3\\
 6 & 7 & 4 & 2 & 3 & 1 & 5\\
 3 & 1 & 6 & 7 & 2 & 5 & 4\\
 5 & 4 & 3 & 1 & 7 & 6 & 2\\
 4 & 5 & 2 & 3 & 1 & 7 & 6\\
 2 & 6 & 1 & 4 & 5 & 3 & 7\\
 7 & 3 & 5 & 6 & 4 & 2 & 1\\
\hline
\end{tabular}
\qquad
$B=$
\begin{tabular}{|ccccccc|}
\hline
 4 & 2 & 5 & 3 & 7 & 1 & 6\\
 3 & 6 & 2 & 4 & 5 & 7 & 1\\
 7 & 3 & 6 & 2 & 1 & 5 & 4\\
 6 & 1 & 7 & 5 & 4 & 2 & 3\\
 2 & 7 & 3 & 1 & 6 & 4 & 5\\
 5 & 4 & 1 & 6 & 2 & 3 & 7\\
 1 & 5 & 4 & 7 & 3 & 6 & 2\\
\hline
\end{tabular}
\caption{A pair of $42$-orthogonal Latin squares of order 7 where all pairs $(i,j)$ 
except $(2,7),(3,2),(3,4),(4,5),(4,7),(6,1),(7,3)$ appear
if $A$ and $B$ are superimposed.}
\label{t42}
\end{figure}

We implemented Algorithm A1 and run it for $n\in [5,20]$, see for example $42$-orthogonal Latin squares of order $7$
in Fig. \ref{t42}. 
The range of values of $r$ for $n\in [5,20]$ computed by Algorithm A1 
is shown in Table \ref{t:res}. 
One can observe that the range is complete for $n=5$ and $n=6$ (by Theorem \ref{thm-pairs}).
For $n\ge 7$, the algorithm only found a subset of possible values of $r$ for two $r$-orthogonal Latin squares. 
For example, it computed 36 values of $r$ for $n=10$ which is $\frac{33}{89}\cdot 100\approx 37.07\%$ of all values. 
For $n=20$, it computed 70 values of $r$ which is $\frac{70}{379}\cdot 100\approx 18.46\%$ of all values. 
So, one needs to develop an algorithm for computing $r$-orthogonal Latin squares of order $n$ for large and small values of $r$.
It also gives rise to an interesting problem.

\begin{quote}
{\bf Open problem}. Let $r(A,B)$ be the {\em orthogonality} of two Latin squares $A$ and $B$ of order $n$, i.e.
the number of distinct ordered pairs when $A$ and $B$ are superimposed. 
What is the expected value of $r(A,B)$ of two random Latin squares $A$ and $B$ of order $n$?
\end{quote}

We computed approximately the expected value of $r(A,B)$ of two random Latin squares $A$ and $B$ 
using the program\footnote{It is based on the Java implementation described by Ignacio Gallego Sagastume\\
\href{https://github.com/bluemontag/igs-lsgp}{https://github.com/bluemontag/igs-lsgp}.} 
developed by Paul Hankin \cite{paulhankin19} implementing an algorithm for generating random Latin squares
by Jacobson and Matthews \cite{jm-gud-96}. 
10000 runs were used for $n=5,\dots,12$ and 1000 runs were used for $n=13,\dots,20$.
The result is shown in Table \ref{t:res} (column $r(A,B)$).
For all $n\in\{5,6,\dots,20\}$, the expected value of $r(A,B)/n^2$ is close to 0.63.


In order to extend the range of $r(A,B)$ computed by Algorithm A1, 
a plausible approach would be to solve the problem of extending two $k\times n$ 
Latin rectangles to two $(k+1)\times n$ Latin rectangles maximizing (minimizing) $r(A,B)$.
This problem seems difficult and we consider an approach where one Latin rectangle
is extended first and then the other Latin rectangle is extended by solving 
the following problem. 

{\bf Problem MaxRAB} ({\bf MinRAB}). 
Let $A$ be a $k\times n$ Latin rectangle, $k<n$, and let $B$ be a $(k+1)\times n$ Latin rectangle.
Extend $A$ to a $(k+1)\times n$ Latin rectangle $A'$ maximizing (resp. minimizing) $r(A',B)$.

\newpage

\begin{table} 
\centering
\begin{tabular}{|r||c|c|c|} 
\hline
$n$ & A1 & A2 & $r(A,B)$\\
\hline
\hline
5 & 5,7,10-19,21,25 & - & 15.8478\\\hline
6 & 6,8-32,34 & - & 22.7416\\\hline
7 & 13-44 & 7,9,11-47,49 & 30.9792\\\hline
8 & 25-53 & 8,10-60 & 40.4537\\\hline
9 & 35-65 & 9,11-77 & 51.2081\\\hline
10 & 47-79 & 10,12-93 & 63.2259\\\hline
11 & 60-93 & 11,13-114 & 76.55\\\hline
12 & 71-109 & 12,14-135 & 91.0366\\\hline
13 & 85-128 & 13,15-157 & 106.847\\\hline
14 & 102-145 & 14,16-183 & 124.075\\\hline
15 & 117-166 & 15,17-210 & 142.279\\\hline 
16 & 137-189 & 16,18-236 & 161.437\\\hline
17 & 155-207,209 & 17,19-268 & 182.85\\\hline
18 & 173,176-233 & 18,20-300 & 204.693\\\hline
19 & 196-257,259,261 & 19,21-334 & 228.004\\\hline
20 & 218-286,290 & 20,22-370 & 252.536\\\hline 
\end{tabular}
\vspace{3mm}
\caption{
Pairs of $r$-orthogonal Latin squares of order $n$ computed by Algorithm A1 (column A1) and Algorithm A2 (column A2).
No entry in column A2 means the same result as in column A1.
Column $r(A,B)$ shows the average orthogonality of two random Latin squares using the program developed 
by Paul Hankin \cite{paulhankin19}.}
\label{t:res}
\end{table}

\begin{theorem} \label{tMaxRAB}
Problems MaxRAB and MinRAB can be solved by $O(n^3)$ time algorithm.
\end{theorem}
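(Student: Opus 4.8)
The plan is to observe that the only freedom in the problem lies in the single new row $A'_{k+1,\cdot}$, and to reduce the optimization of $r(A',B)$ over that row to a weighted perfect matching. First I would separate off the contribution of the fixed rows: for $i\le k$ the entries of $A'$ coincide with those of $A$, so the pairs $(A_{i,j},B_{i,j})$ with $i\le k$ form a set $P$ of ordered pairs that does not depend on the choice of last row. Writing $Q=\{(A'_{k+1,j},B_{k+1,j}):j\in[n]\}$ for the set of pairs produced by the new row, we have $r(A',B)=|P\cup Q|=|P|+|Q|-|P\cap Q|$. The key observation is that the new row is a permutation of $[n]$, so the first coordinates $A'_{k+1,1},\dots,A'_{k+1,n}$ are pairwise distinct; hence the $n$ pairs in $Q$ are pairwise distinct and $|Q|=n$ no matter which completion is chosen. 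Consequently $r(A',B)=|P|+n-|P\cap Q|$, and since $|P|$ and $n$ are fixed, maximizing $r$ is equivalent to minimizing $|P\cap Q|$, while minimizing $r$ is equivalent to maximizing $|P\cap Q|$.

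Next I would cast the remaining problem as an assignment problem. Build a bipartite graph $G=(V_1,V_2,E)$ whose left vertices $V_1$ are the columns $1,\dots,n$ and whose right vertices $V_2$ are the symbols $1,\dots,n$, with an edge $(j,s)\in E$ exactly when symbol $s$ may legally be placed in cell $(k+1,j)$, i.e.\ when $s$ does not already occur in column $j$ of $A$ (equivalently $s\in S_j$ in the notation of the Preliminaries). A legal extension of $A$ to a $(k+1)\times n$ Latin rectangle is precisely a perfect matching of $G$, and such a matching exists because $A$ is a Latin rectangle with $k<n$ (Theorem \ref{hall45}, via Theorem \ref{hall35}). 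To each edge I attach the $0/1$ weight $w_{j,s}=1$ if $(s,B_{k+1,j})\in P$ and $w_{j,s}=0$ otherwise. Any perfect matching $M$ selects one symbol per column, and $\sum_{(j,s)\in M}w_{j,s}$ counts exactly the columns whose new pair already lies in $P$, that is $\sum_{(j,s)\in M}w_{j,s}=|P\cap Q|$. Thus MaxRAB is the minimum-weight perfect matching in $G$ and MinRAB is the maximum-weight perfect matching.

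Finally I would bound the running time. The set $P$ has at most $kn\le n^2$ elements and can be stored in an $n\times n$ boolean table indexed by the two coordinates, built in $O(n^2)$ time from the first $k$ rows; each weight $w_{j,s}$ is then read off in $O(1)$, so the weighted graph, with its $O(n^2)$ edges, is assembled in $O(n^2)$ time. A minimum- (or maximum-) weight perfect matching in a balanced bipartite graph on $n+n$ vertices is computed by the Hungarian algorithm in $O(n^3)$ time, and negating the weights converts one objective into the other. Extracting the optimal row from the matching and appending it to $A$ costs $O(n)$. The dominant term is the matching step, giving the claimed $O(n^3)$ bound for both MaxRAB and MinRAB.

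The step I expect to be the crux is the decomposition argument rather than an obstacle in the usual sense: recognizing that, because the new row is a permutation, its pairs cannot collide with one another, so the objective $|P\cup Q|$ splits into a sum of per-column indicator contributions and the problem linearizes into an assignment problem instead of a more tangled combinatorial optimization. Once this linearization is in place, the reduction to weighted bipartite matching and the $O(n^3)$ Hungarian bound are routine; one could even exploit the $0/1$ structure of the weights for a faster matching routine, but this is not needed to establish the stated bound.
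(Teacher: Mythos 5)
Your proposal is correct and follows essentially the same route as the paper: reduce the choice of the new row to an assignment problem on the bipartite graph of columns versus symbols missing from each column (feasible by Hall's theorem), with $0/1$ weights recording whether the induced pair already occurs among the first $k$ rows, solved by the Hungarian method in $O(n^3)$. The only cosmetic difference is that you weight collisions (so MaxRAB becomes a minimum-weight matching) while the paper weights new pairs and maximizes, which are complementary and equivalent formulations.
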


\begin{proof}
Let $A$ be a $k\times n$ Latin rectangle and $B$ be a $(k+1)\times n$ Latin rectangle.
Let $A_j$ denote the set of positive integers that do not occur in column $j$ of $A$.
Let $S=\{(A_{i,j},B_{i,j})~|~i\in [k], j\in [n]\}$.
Construct a weighted bipartite graph $G=(V_1,V_2,E)$ where 
$V_1 =\{v_1,\dots,v_n\}$, $V_2 =\{u_1,\dots,u_n\}$, and
$E =\{(v_i,u_j)~|~j\in A_i\}$.
To complete the construction, we assign the weight to each edge $(v_i,u_j)\in E$ as 
\[ 
w(v_i,u_j) = \begin{cases}
  1 & \text{if $(j,B_{k+1,i}) \notin S$,} \\
  0 & \text{if $(j,B_{k+1,i}) \in S$}.
\end{cases}
\]

Graph $G$ has a perfect matching since Latin rectangle $A$ can be extended to a 
$(k+1)\times n$ Latin rectangle by Theorem \ref{hall45}. 
In order to solve problem MaxRAB, 
we compute the maximum weight matching $M$ which is the solution of the assignment problem.
It can be found using the Hungarian method \cite{bc-lape-99,kuhn1955} in $O(n^3)$ time.
We extend Latin rectangle $A$ to a $(k+1)\times n$ Latin rectangle $A'$ by setting
$A'_{k+1,i}=j$ if $(v_i,u_j)$ is an edge of the matching.
It remains to prove that $r(A',B)$ is maximized.

Let $A^*$ be an extension of $A$ (i.e. $A^*$ is a $(k+1)\times n$ Latin rectangle such that the first $k$ rows of $A^*$ and $A$ are the same)
maximizing $r(A^*,B)$.
Consider the permutation $A^*_{k+1,1},A^*_{k+1,2},\dots,A^*_{k+1,n}$.
Clearly, $A^*_{k+1,i}\in A_i$ for all $i\in [n]$.
Therefore $M^*=\{(v_i,u_j)~|~i\in [n], j=A^*_{k+1,i}\}$ is a matching in graph $G$.
Let $B'$ be a $k\times n$ Latin rectangle obtained using the first $k$ rows of Latin rectangle $B$. 
Then $r(A^*,B) - r(A,B')$ is the number of pairs $(A^*_{k+1,i},B_{k+1,i})$ in $[n]^2\setminus S$.
Therefore $r(A^*,B) - r(A,B')=w(M^*)$.  
Similarly $r(A',B) - r(A,B')=w(M). $
Since $M$ is the solution of the assignment problem, $w(M)\ge w(M^*)$.
Then 
\begin{align*}
r(A',B) &= r(A,B')+w(M)\\
&\ge   r(A,B')+w(M^*)\\
&= r(A^*,B).
\end{align*}

Since $r(A^*,B)\ge r(A',B)$, we have $r(A^*,B) = r(A',B)$. So, Latin rectangle $A'$ is optimal. 

A similar argument can be used to solve problems MinRAB.
The minimum weight matching can be applied on the same graph $G$. 
\end{proof}

By Theorem \ref{tMaxRAB}, the extension of Latin rectangle $A$ is optimal.
However, the extension of Latin rectangle $B$ in our approach might be not the best.
We attempt to optimize (maximize or minimize) $r(A,B)$ using cycle switches \cite{am-tls-90,w-csls-04}.

\begin{figure}[htb]
\centering
\begin{tabular}{ccccc}

\begin{tabular}{|ccccc|}
\hline
 4&3&2&5&1\\
 5 & 1 & 3 & 2 & 4 \\
 3 & 4 & 5 & 1 & 2 \\
 1 & 2 & 4 & 3 & 5 \\
 2 & 5 & 1 & 4 & 3 \\
\hline
\end{tabular}

& \quad\quad & 

\begin{tabular}{|ccccc|}
\hline
4&3&2&5&1\\
 5 & 1 & 3 & 2 & 4 \\
 3 & \hcell{5} & \hcell{1} & \hcell{4} & 2 \\
 1 & 2 & 4 & 3 & 5 \\
 2 & \hcell{4} & \hcell{5} & \hcell{1} & 3 \\
\hline
\end{tabular}

& \quad\quad & 

\begin{tabular}{|ccccc|}
\hline
 4 & \hcell{2} & \hcell{3} & 5 & 1 \\
 5 & 1 & \hcell{2} & \hcell{3} & 4 \\
 3&4&5&1&2\\
 1 & \hcell{3} & 4 & \hcell{2} & 5 \\
 2 & 5 & 1 & 4 & 3 \\
\hline
\end{tabular}
\\

(a) && (b) && (c) 

\end{tabular}

\caption{(a) A Latin square $A$. 
(b) Switching a row cycle in $A$ between the third row and the fifth row.
(c) Switching a symbol cycle in $A$ on symbols 2 and 3.
}
\label{tableRS1}
\end{figure}

{\em Switching cycles}. 
Consider distinct rows $r$ and $s$ in a Latin square $A$.
Let $\pi_{r,s}$ be a permutation which maps $A_{r,i}$ to $A_{s,i}$.
Clearly, $\pi_{r,s}$ is a derangement, i.e. it has no fixed points. 
Take any cycle of $\pi_{r,s}$ and let $C$ be the set of columns involved in the cycle.
Switching {\em row cycle} $C$ in $A$ is defined by
\[ 
A'_{i,j} = 
\begin{cases}
  A_{s,j}  & \text{if $i=r$ and $j\in C$,} \\
  A_{r,j}  & \text{if $i=s$ and $j\in C$,} \\
  A_{i,j}  & \text{otherwise}, 
\end{cases}
\]
see an example in Fig. \ref{tableRS1}(b).
A {\em column cycle} is a set of elements that form a row cycle when the square is
transposed. 
Switching a {\em symbol cycle} on two symbols $a$ and $b$ is achieved by replacing every
occurrence of $a$ in the cycle by $b$ and vice versa, see an example in Fig. \ref{tableRS1}(c).

\begin{figure}[htb]
\centering
\begin{tabular}{ccccc}

\begin{tabular}{|ccccc|}
\hline
4&3&2&5&1\\
 5 & 1 & 3 & 2 & 4 \\
 3 & 4 & 5 & 1 & 2 \\
 1 & 2 & 4 & 3 & 5 \\
\hline
\end{tabular}

& \quad\quad & 

\begin{tabular}{|ccccc|}
\hline
 \hcell{2} & 3 & \hcell{4} & 5 & 1 \\
5&1&3&2&4 \\
 3 & 4 & 5 & 1 & 2 \\
 \hcell{4} & 2 & \hcell{1} & 3 & 5 \\
\hline
\end{tabular} 

& \quad\quad & 

\begin{tabular}{|ccccc|}
\hline
4&3&2&5&1\\
 5 & 1 & \hcell{4} & 2 & \hcell{3} \\
 3 & 4 & 5 & 1 & 2 \\
 1 & 2 & \hcell{3} & \hcell{4} & 5 \\
\hline
\end{tabular}
\\
(a) && (b) && (c) 
\end{tabular}
\caption{(a) A Latin rectangle $A$. 
(b) Function $f_{1,3}$ for $A$ induces a cycle (3,5) and a path (1,4,2). 
A Latin rectangle after switching column path (1,4,2) in $A$ is shown.
(c) Switching a symbol path in $A$ on symbols 3 and 4.
}
\label{tableRS2}
\end{figure}

In our approach, we have Latin rectangles instead of Latin squares.
Switching row cycles can be applied to Latin rectangles since the rows are full.
We adapt column cycles to Latin rectangles as {\em column paths}.
Consider distinct columns $c$ and $d$ in a Latin rectangle $A$.
Let $f_{c,d}$ be a function that maps $A_{i,c}$ to $A_{i,d}$.
Start with a symbol $A_{i,c}$ in column $c$ which is not present in column $d$.
Do the following step for row $i$.
Find $A_{i,d}$ in column $c$, say $A_{j,c}=A_{i,d}$ if it exists. 
Swap $A_{i,c}$ and $A_{i,d}$.
If $j$ is not found then stop; otherwise set $i=j$ and repeat. 
See an example in Fig. \ref{tableRS2}(b).
We have $c=1$ and $d=3$. 
Start with symbol $A_{4,1}=1$ which is not present in column $3$ in Fig. \ref{tableRS2}(a).
Function $f_{1,3}$ maps 1 to 4, 4 to 2. 
Swap $A_{4,1}$ and $A_{4,3}$, then swap $A_{1,1}$ and $A_{1,3}$.
The result is shown in Fig. \ref{tableRS2}(b).

Symbol cycles may occur in Latin rectangles. 
We also apply {\em symbol paths} which are defined as follows.
Consider distinct symbols $a$ and $b$ in a Latin rectangle $A$.
Start with a symbol $A_{i,j}=a$ in column $j$ such that symbol $b$ is not present in column $j$, 
so $(i,j)$ is the first cell in the symbol path. 
Repeat the following step.
Find symbol $\{a,b\}\setminus A_{i,j}$ in row $i$, say $A_{i,j'}$.
Append $(i,j')$ to the path. 
Find symbol $\{a,b\}\setminus A_{i,j'}$ in column $j'$.
If it does not exist, then stop.
Suppose it exists, say $A_{i',j'}$. Then append $(i',j')$ to the path,
set $i=i',j=j'$ and repeat the step. 
When the path is computed, swap symbols $a$ and $b$ in the path,
see an example in Fig. \ref{tableRS2}(c).

\begin{figure}[ht]
\begin{algorithmic}[1]
  \STATE {\bf Input:} $n \geq 5$ and $r\in [n,n^2]$ satisfying Theorem \ref{thm-pairs}.
  \STATE {\bf Output:} $n\times n$ Latin squares $A$ and $B$ such that $r(A,B)=r$.
  \STATE Compute two random permutations of $[n]$ and copy them to $A[1][*]$ and $B[1][*]$.
  \FOR{$k = 1, \ldots, n-1$}
\STATE For all $j\in [n]$, compute $A_j=[n]\setminus \{A[1][j],A[2][j],\dots,A[k][j]\}$.
\STATE For all $j\in [n]$, compute $B_j=[n]\setminus \{B[1][j],B[2][j],\dots,B[k][j]\}$.
\STATE Set $V_1=\{v_1,\dots,v_n\}$ and $V_2=\{u_1,\dots,u_n\}$.
\STATE Compute a bipartite graph $G_B=(V_1,V_2,E_B)$ 
where $E_B=\{(v_i,u_j)~|~u_j\in B_i\}$.
\STATE Compute a matching $M=\{(v_i,u_{b_i})~|~i\in [n]\}$ in $G_B$. 
\STATE For all $j\in [n]$, set $B[k+1][j]=b_j$.
\STATE Compute  $S=\{(A[i][j],B[i][j])~|~i\in [k], j\in [n]\}$.
\STATE Compute a graph $G=(V_1\cup V_2,E,w)$ where 
$E =\{(v_i,u_j)~|~j\in A_i\}$,
and $w(v_i,u_j) = 0$ if $(j,B[k+1][i]) \in S$; otherwise $w(v_i,u_j) = 1$.
\STATE Compute the maximum weight matching $M=\{(v_i,u_{a_i})~|~i\in [n]\}$ in $G$.
\STATE For all $j\in [n]$, set $A[k+1][j]=a_j$.
\ENDFOR
\WHILE{$r(A,B)\ne r$}
\STATE Compute $A'$ and $B'$ using a random row/column/symbol cycle 
\STATE or a random column/symbol path
\IF{$|r(A',B')-r|<|r(A,B)-r|$}
\STATE $A=A',B=B'$
\ENDIF
\ENDWHILE
\end{algorithmic}
\caption{Algorithm A2.}
\label{fig:A2}
\end{figure}

We implemented algorithm A2 using the method from Theorem \ref{tMaxRAB} combined with switching
row/ column/symbol cycles and column/symbol path, see the pseudocode in Fig. \ref{fig:A2}.
The results for $n=7,\dots,20$ are shown in Table \ref{t:res}. 
Note that the low values of $r(A,B)$ are covered completely (by Theorem \ref{thm-pairs}) and 
new values of $r(A,B)$ larger than the ones computed by algorithm A1.
The problem of computing all large values of $r(A,B)$ is quite difficult.
Specifically, the problem of computing random orthogonal Latin squares (i.e. for $r(A,B)=n^2$)
is very difficult.

\begin{table} 
\centering
\begin{tabular}{|r||c|c|c|} 
\hline
$n$ & A3 & A4 & $r(A,A^T)$\\
\hline
\hline
5 & 5,7,10-11,13-15,17,19,21,25 & - & 11.2769\\\hline 
6 & 6,8-31 & - & 15.4191 \\\hline 
7 & 7,9-45,47,49 & - & 20.2193\\\hline 
8 & 8,10-58 & 8,10-62,64 & 25.9692\\\hline 
9 & 20-69 & 9,11-79,81 & 32.368\\\hline 
10 & 29,33-84 & 10,12-98 & 39.5512\\\hline
11 & 42,44-99 & 11,13-119 & 47.4226\\\hline
12 & 56,59-117 & 12,14-140 & 56.1379\\\hline
13 & 72-133 & 13,15-163& 65.5507\\\hline
14 & 83-84,86,88-152 & 14,16-184 & 75.7708\\\hline
15 & 101,103-170,172-173 & 15,17-213 & 86.5367\\\hline 
16 & 118,120-196 & 16,18-242 & 98.443\\\hline
17 & 141-216 & 17,19-271 & 110.387\\\hline
18 & 161-239 & 18,20-307 & 123.722\\\hline
19 & 178-266,268,272 & 19,21-340 & 137.122\\\hline
20 & 202-292 & 20,22-375 & 152.056\\\hline 
\end{tabular}
\vspace{3mm}
\caption{
$r$-self-orthogonal Latin squares of order $n$ computed by Algorithm A3 (column A3) and Algorithm A4 (column A4).
No entry in column A4 means the same result as in column A3.
Column $r(A,A^T)$ shows the average value of $r(A,A^T)$ of a random Latin square $A$ using the program developed 
by Paul Hankin \cite{paulhankin19}.
The files for Latin squares computed by algorithm A4 are available in \cite{www}.}
\label{t:res2}
\end{table}

\section{Computing Self-Orthogonal Latin Rectangles}

We experiment with two algorithms using the algorithms developed in the previous section.
If $A$ is a $k\times n$ Latin rectangle then we assume that $B$ is an $n\times k$ Latin rectangle $B=A^T$.
For example, if a column cycle is applied to Latin rectangle $A$, it also affects $B$. 
We call the algorithms corresponding to algorithms A1 and A2, Algorithm A3 and A4, respectively.
The range of values of $r$ for $n\in [5,20]$ computed by Algorithm A3 
is shown in Table \ref{t:res2}. 
One can observe that the range is complete for $5\le n\le 9$  (by Theorem \ref{thm-self}).
In general,  algorithms A3 and A4 cover more values of $r$ compared to algorithms A1 and A2.

It is also related to an interesting problem:  What is the expected value of $r(A,A^T)$ where $A$ is a random Latin square of order $n$?
We computed approximately the expected value of $r(A,A^T)$ of a random Latin square $A$ 
using the program
developed by Paul Hankin \cite{paulhankin19} implementing an algorithm for generating random Latin squares
by Jacobson and Matthews \cite{jm-gud-96}. 
10000 runs were used for $n=5,\dots,15$ and 1000 runs were used for $n=16,\dots,20$.
The result is shown in Table \ref{t:res2} (see column $r(A,A^T)$).
For example, the expected value of $r(A,A^T)/n^2$ is close to 0.395 for $n=10$ and is close to 0.38 for $n=20$.

\section{Conclusions and Open Questions}
We developed randomized algorithms for computing pairs of $r$-orthogonal Latin squares of 
order $n$ and algorithms for computing $r$-self-orthogonal Latin squares of order $n$.
We implemented them and found that they cover many values of $r$ for $n$ in the range $5\le n\le 20$.
The problem of computing $r$-orthogonal Latin squares of  order $n$ and 
$r$-self-orthogonal Latin squares of order $n$ where $r$ is close to $n^2$ 
is difficult. 
For example, it is impossible to construct orthogonal Latin squares of order 
$n\ge 10$ and 
self-orthogonal Latin squares of order $n\ge 10$ by extending Latin rectangles minimizing $r(A,B)$ and $r(A,A^T)$, respectively. 
It would be interesting to find the expected values of $r(A,B)$ and $r(A,A^T)$ for random Latin squares $A$ and $B$ of order $n$.

\end{document}